\numberwithin{equation}{section}
\numberwithin{figure}{section}
\theoremstyle{plain}
\newtheorem{thm}{\protect\theoremname}[section]
\newtheorem{lem}[thm]{\protect\lemmaname}
\newtheorem{prop}[thm]{\protect\propositionname}
\newtheorem{cor}[thm]{\protect\corollaryname}
\theoremstyle{remark}
\newtheorem{rem}[thm]{\protect\remarkname}
\providecommand{\corollaryname}{Corollary}
\providecommand{\lemmaname}{Lemma}
\providecommand{\propositionname}{Proposition}
\providecommand{\remarkname}{Remark}
\providecommand{\theoremname}{Theorem}
\begin{document}
\subjclass[2020]{Primary: 94A17; secondary: 47N50, 81P45 }
\title{Stability of Maximum-Entropy Inference in Finite Dimensions}
\begin{abstract}
We study maximum-entropy inference for finite-dimensional quantum
states under linear moment constraints. Given expectation values of
finitely many observables, the feasible set of states is convex but
typically non-unique. The maximum-entropy principle selects the Gibbs
state that agrees with the data while remaining maximally unbiased.
We prove that convergence of moments and entropy implies convergence
of states in trace norm, derive explicit quantitative bounds linking
data and entropy deviations to state distance, and show that these
results are stable under unital completely positive maps. The analysis
is self-contained and relies on convex duality, relative entropy,
and Pinsker-type inequalities, providing a rigorous and unified foundation
for finite-dimensional maximum-entropy inference.
\end{abstract}

\author{James Tian}
\address{Mathematical Reviews, 535 W. William St, Suite 210, Ann Arbor, MI
48103, USA}
\email{jft@ams.org}
\keywords{maximum-entropy inference; Gibbs state; relative entropy; convex duality;
trace-norm; stability; quantum information}

\maketitle
\tableofcontents{}

\section{Introduction}\label{sec:1}

A central task in quantum information theory is to describe an unknown
quantum state, represented by a density matrix $\rho$ on a finite-dimensional
Hilbert space. Knowing $\rho$ allows one to predict the outcomes
of all measurements and to quantify the resources available for computation,
communication, and metrology. The standard procedure, quantum state
tomography, reconstructs $\rho$ from a complete set of measurement
outcomes. However, the number of measurements needed grows exponentially
with system size \cite{PhysRevLett.105.150401}, making full reconstruction
infeasible for anything beyond small systems. This practical limitation
motivates methods that infer partial information about the state from
a restricted set of measurements.

In many realistic settings, only a small number of observables $X_{1},\dots,X_{k}$
can be measured, giving access to their expectation values
\[
m_{i}=tr\left(\rho X_{i}\right),\quad i=1,\dots,k.
\]
These values define the constraint set
\[
C\left(m\right)=\left\{ \rho'\in\mathcal{S}_{d}:tr\left(\rho'X_{i}\right)=m_{i},\ i=1,\dots,k\right\} ,
\]
the collection of all density matrices consistent with the observed
data. The set $C\left(m\right)$ is convex and compact but typically
large, since the measured data underdetermine the underlying state.
This leads to the inverse problem: among all $\rho'\in C\left(m\right)$,
which one should represent the system?

A principled answer was proposed by E. T. Jaynes, following Shannon's
information-theoretic ideas \cite{MR87305,MR2797301}. The maximum-entropy
principle prescribes choosing the state $\sigma\in C\left(m\right)$
that maximizes the von Neumann entropy $S(\rho)=-tr\left(\rho\log\rho\right).$
The rationale is straightforward: $\sigma$ is consistent with all
known information and assumes nothing beyond it. From a convex-analytic
viewpoint, the maximum-entropy principle is the Legendre dual of the
log-partition function, forming the mathematical backbone of exponential
families \cite{MR274683,MR365798,MR1961186}. In finite dimensions
this unique maximizer has the Gibbs form

\[
\sigma=\frac{\exp\left(-\sum_{i}\lambda_{i}X_{i}\right)}{tr\left(\exp\left(-\sum_{i}\lambda_{i}X_{i}\right)\right)}
\]
where $\lambda=\left(\lambda_{1},\dots,\lambda_{k}\right)$ are real
parameters determined by the constraints. Gibbs states of this form
appear throughout physics. In statistical mechanics they describe
equilibrium ensembles \cite{MR2204925,MR1230389}; in quantum thermodynamics
they act as reference “free” states for resource theories of athermality
\cite{MR3982889,doi:10.1073/pnas.1411728112}; and in data-driven
fields such as quantum machine learning and signal processing they
provide physically meaningful probabilistic models \cite{MR1800071,MR2238070}.
Algorithms for preparing Gibbs states on quantum computers further
demonstrate their operational relevance \cite{MR2570218,MR4538180,MR4207087,MR3676655}. 

Because experimental data are noisy and constraints are rarely exact,
it is important to understand how the maximum-entropy inference behaves
under small perturbations. Three questions are especially relevant:
\begin{enumerate}
\item Stability. If a sequence of states $(\rho_{n})$ satisfies $m(\rho_{n})\to m$
and $S(\rho_{n})\to S(\sigma)$, does this imply that $\rho_{n}\to\sigma$
in trace norm?
\item Quantitative control. Can we bound $\left\Vert \rho-\sigma\right\Vert _{1}$
in terms of the moment mismatch $\left\Vert m(\rho)-m\right\Vert $
and the entropy difference $S(\sigma)-S(\rho)$?
\item Physical consistency. Is this convergence preserved under unital completely
positive (u.c.p.) maps, the natural operations describing physical
post-processing and quantum channels?
\end{enumerate}
Related stability questions have been studied in mathematical physics,
especially in the analysis of ensemble equivalence and thermodynamic
limits \cite{MR793553,MR289084}. There also exist finite-dimensional
investigations of MaxEnt inference, including the continuity properties
of the MaxEnt map \cite{MR3227512,MR3498341} and algorithmic approaches
for Hamiltonian estimation \cite{PhysRevLett.112.190501,MR3036977}.
However, these results address specific aspects in isolation. A unified,
finite-dimensional treatment that establishes existence, uniqueness,
and quantitative stability within a single, information-theoretic
framework appears to be missing. This paper provides such a formulation.

Our analysis is self-contained and based only on convexity, compactness,
and standard properties of quantum entropy. The proofs use an exact
variational identity for the relative entropy together with Pinsker-type
inequalities \cite{MR345574,MR2344161,Watrous_2018}, which relate
relative entropy to the trace distance, and the Arveson extension
theorem \cite{MR247483,MR1976867}, ensuring stability under u.c.p.
maps. The resulting estimates are explicit, dimension-independent,
and directly applicable to quantum information and experimental data
analysis.

\textbf{Organization. }\prettyref{sec:2} introduces the feasible
moment map, characterizes the set of attainable expectation values,
and proves existence and uniqueness of the maximum-entropy state for
any feasible data. \prettyref{sec:3} establishes convergence under
approximate moment and entropy constraints and shows that this convergence
is preserved by u.c.p. post-processing. \prettyref{sec:4} derives
quantitative error bounds, giving explicit rates in terms of entropy
and moment deviations.

\section{States, Constraints, and the Feasible Moment Map}\label{sec:2}

Let $\mathcal{M}=M_{d}\left(\mathbb{C}\right)$ be the algebra of
$d\times d$ complex matrices with the standard trace $tr$. A state
on $\mathcal{M}$ is a positive operator $\rho\geq0$ with $tr\left(\rho\right)=1$.
We treat states both as density matrices and as linear functionals,
via $\rho\left(A\right):=tr\left(\rho A\right)$. Let 
\[
\mathcal{S}_{d}=\left\{ \rho:\rho\geq0,\:tr\left(\rho\right)=1\right\} 
\]
be the set of all density matrices. It is convex (by linearity of
positivity and trace) and compact in finite dimensions (closed and
bounded).

For a state $\rho\in\mathcal{S}_{d}$, define the von Neumann entropy
\[
S\left(\rho\right):=-tr\left(\rho\log\rho\right).
\]
This is finite and continuous in finite dimensions. 

Let $X_{1},\dots,X_{k}\in\mathcal{M}_{sa}$ be fixed self-adjoint
operators (the ``constraints''). Define the unital self-adjoint
subspace
\[
V:=span\left\{ I,X_{1},\dots,X_{k}\right\} \subset\mathcal{M}.
\]
Thus, $V$ is a concrete operator system, but we will use only that
it is a unital $*$-closed linear subspace.

For a state $\rho\in\mathcal{S}_{d}$, we consider its values on $V$,
meaning the linear functional 
\[
A\mapsto tr\left(\rho A\right),\quad A\in V.
\]
Equivalently, we record the scalar moments
\[
m\left(\rho\right):=\left(m_{1}\left(\rho\right),\dots,m_{k}\left(\rho\right)\right)\in\mathbb{R}^{k}
\]
where $m_{i}\left(\rho\right):=tr\left(\rho X_{i}\right)$, $i=1,\dots,k$. 

Fix a target vector $m=\left(m_{1},\dots,m_{k}\right)\in\mathbb{R}^{k}$.
Define the constraint set
\[
C\left(m\right):=\left\{ \rho\in\mathcal{S}_{d}:tr\left(\rho X_{i}\right)=m_{i},\:i=1,\dots,k\right\} .
\]
We will be interested in the case where $C\left(m\right)$ is nonempty.

The following lemma characterizes feasibility.
\begin{lem}
Given $X_{1},\dots,X_{k}\in\mathcal{M}_{sa}$, let
\[
M_{X}:=\left\{ m\left(\rho\right):\rho\in\mathcal{S}_{d}\right\} \subset\mathbb{R}^{k}.
\]
Then $M_{X}$ is a compact convex set, and $C\left(m\right)\neq\emptyset$
if and only if $m\in M_{X}$. Furthermore, the following are equivalent:
\begin{enumerate}
\item \label{enu:b1-1}$m\in M_{X}$ (i.e., $C\left(m\right)\neq\emptyset$)
\item \label{enu:b-1-2}For every $\lambda=\left(\lambda_{1},\cdots,\lambda_{k}\right)\in\mathbb{R}^{k}$,
\[
\sum^{k}_{i=1}\lambda_{i}m_{i}\leq\lambda_{max}\left(\sum\nolimits^{k}_{i=1}\lambda_{i}X_{i}\right),
\]
where $\lambda_{max}\left(A\right)$ denotes the largest eigenvalue
of a self-adjoint matrix $A$. 
\item \label{enu:b-1-3}We have 
\[
M_{X}=conv\left\{ \left(\left\langle \psi,X_{1}\psi\right\rangle ,\dots,\left\langle \psi,X_{k}\psi\right\rangle \right):\psi\in\mathbb{C}^{d},\:\left\Vert \psi\right\Vert =1\right\} .
\]
Equivalently, $m$ is a convex combination of the moment vectors of
pure states.
\end{enumerate}
\end{lem}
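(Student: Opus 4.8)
The plan is to treat the moment assignment $\rho\mapsto m(\rho)$ as a single real-linear map from the Hermitian matrices into $\mathbb{R}^k$ and to extract all three characterizations from its behaviour on the convex body $\mathcal{S}_d$. First I would record that each coordinate $m_i(\rho)=tr(\rho X_i)$ is real-linear in $\rho$, so $m$ restricts to a continuous linear map on $\mathcal{M}_{sa}$. Since $\mathcal{S}_d$ is convex and compact, its image $M_X=m(\mathcal{S}_d)$ is convex and compact, being the continuous linear image of a compact convex set. The feasibility dichotomy is then immediate: $C(m)=\{\rho\in\mathcal{S}_d:m(\rho)=m\}$ is exactly the fiber of $m$ over $m$, hence nonempty precisely when $m$ lies in the image $M_X$.

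For the explicit description \eqref{enu:b-1-3}, I would invoke the fact that the extreme points of $\mathcal{S}_d$ are exactly the rank-one projections $|\psi\rangle\langle\psi|$ with $\|\psi\|=1$, for which $m_i(|\psi\rangle\langle\psi|)=\langle\psi,X_i\psi\rangle$. By the finite-dimensional Krein--Milman (Minkowski) theorem, $\mathcal{S}_d$ is the convex hull of these pure states; since $m$ is linear it carries convex hulls to convex hulls, yielding $M_X=\mathrm{conv}\{(\langle\psi,X_1\psi\rangle,\dots,\langle\psi,X_k\psi\rangle):\|\psi\|=1\}$. Membership of $m$ in $M_X$ is then the same as membership in this convex hull, so the equivalence of \eqref{enu:b1-1} and \eqref{enu:b-1-3} is automatic.

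The substantive content is the dual characterization \eqref{enu:b-1-2}, which I would prove using the variational identity $\sup_{\rho\in\mathcal{S}_d}tr(\rho A)=\lambda_{max}(A)$ valid for every self-adjoint $A$, the supremum being attained at a top eigenvector. For the forward direction, if $m=m(\rho)$ then for any $\lambda$ one has $\sum_i\lambda_i m_i=tr(\rho\sum_i\lambda_i X_i)\le\lambda_{max}(\sum_i\lambda_i X_i)$ by this identity. For the converse I would argue by contraposition: if $m\notin M_X$, then since $M_X$ is compact and convex the separating hyperplane theorem yields a direction $\lambda\in\mathbb{R}^k$ with $\langle\lambda,m\rangle>\sup_{m'\in M_X}\langle\lambda,m'\rangle=\sup_{\rho}tr(\rho\sum_i\lambda_i X_i)=\lambda_{max}(\sum_i\lambda_i X_i)$, which violates \eqref{enu:b-1-2}.

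The main obstacle, though a mild one, is ensuring that the separation argument is applied to a genuinely closed convex set; this is why establishing compactness of $M_X$ at the outset is essential. Once that is in place, the only input beyond elementary convexity is the eigenvalue variational principle, and the lemma assembles cleanly from these pieces.
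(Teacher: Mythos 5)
Your proposal is correct and follows essentially the same route as the paper: image of the compact convex state space under the linear moment map, the variational identity $\sup_{\rho}tr(\rho A)=\lambda_{max}(A)$ for the support function, strict separation of a point from the closed convex set $M_X$ for the converse of the dual characterization, and Krein--Milman plus linearity for the pure-state convex hull description. The only cosmetic difference is that the paper packages the separation step as the identity $M_X=K$ with $K$ the intersection of the supporting half-spaces, whereas you argue the contrapositive directly; these are the same argument.
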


\begin{proof}
Recall the set of density matrices $\mathcal{S}_{d}$ is compact,
convex. The map $T:\mathcal{S}_{d}\rightarrow\mathbb{R}^{k}$, 
\[
T\left(\rho\right)=m\left(\rho\right)=\left(tr\left(\rho X_{1}\right),\dots,tr\left(\rho X_{k}\right)\right)
\]
is linear (hence continuous). Therefore, its image $M_{X}=T\left(\mathcal{S}_{d}\right)$
is compact and convex. 

By definition, 
\[
C\left(m\right)\neq\emptyset\Longleftrightarrow\exists\rho\in\mathcal{S}_{d},\ \text{s.t. }T\left(\rho\right)=m\Longleftrightarrow m\in M_{X}.
\]
This proves compactness, convexity, and the equivalence $C\left(m\right)\neq\emptyset\Leftrightarrow m\in M_{X}$. 

For a self-adjoint matrix $H$, we have 
\[
\sup_{\rho\in\mathcal{S}_{d}}tr\left(\rho H\right)=\lambda_{max}\left(H\right).
\]
In fact, for any $\rho\in\mathcal{S}_{d}$, the spectral bound $H\leq\lambda_{max}\left(H\right)I$
yields 
\[
tr\left(\rho H\right)\leq\lambda_{max}\left(H\right)tr\left(\rho\right)=\lambda_{max}\left(H\right).
\]
Equality is achieved by the rank-one projection $\rho=\left|\psi_{max}\left\rangle \right\langle \psi_{max}\right|$
onto a unit eigenvector $\psi_{max}$ for $\lambda_{max}\left(H\right)$. 

Consequently, for $\lambda\in\mathbb{R}^{k}$, writing $H_{\lambda}:=\sum^{k}_{i=1}\lambda_{i}X_{i}$,
\begin{align}
\sup_{y\in M_{X}}\sum\nolimits^{k}_{i=1}\lambda_{i}y_{i} & =\sup_{\rho\in\mathcal{S}_{d}}\sum\nolimits^{k}_{i=1}\lambda_{i}tr\left(\rho X_{i}\right)\nonumber \\
 & =\sup_{\rho\in\mathcal{S}_{d}}tr\left(\rho\sum\nolimits^{k}_{i=1}\lambda_{i}X_{i}\right)\nonumber \\
 & =\lambda_{max}\left(H_{\lambda}\right).\label{eq:b1}
\end{align}
Thus the support function $h_{M_{X}}\left(\lambda\right):=\sup_{y\in M_{X}}\left\langle \lambda,y\right\rangle $
satisfies 
\begin{equation}
h_{M_{X}}\left(\lambda\right)=\lambda_{max}\left(\sum\nolimits^{k}_{i=1}\lambda_{i}X_{i}\right).\label{eq:b2}
\end{equation}

\eqref{enu:b1-1}$\Rightarrow$\eqref{enu:b-1-2}. Assume $m\in M_{X}$.
Then for any $\lambda\in\mathbb{R}^{k}$, using \eqref{eq:b1}-\eqref{eq:b2},
we get
\[
\sum^{k}_{i=1}\lambda_{i}m_{i}\leq\sup_{y\in M_{X}}\sum^{k}_{i=1}\lambda_{i}y_{i}=h_{M_{X}}\left(\lambda\right)=\lambda_{max}\left(H_{\lambda}\right).
\]
Hence \eqref{enu:b-1-2} holds. 

\eqref{enu:b-1-2}$\Rightarrow$\eqref{enu:b1-1}. Assume for every
$\lambda\in\mathbb{R}^{k}$, 
\[
\left\langle \lambda,m\right\rangle \leq\lambda_{max}\left(\sum\nolimits^{k}_{i=1}\lambda_{i}X_{i}\right)=h_{M_{X}}\left(\lambda\right).
\]
Define
\[
K=\left\{ y\in\mathbb{R}^{k}:\left\langle \lambda,y\right\rangle \leq h_{M_{X}}\left(\lambda\right),\;\forall\lambda\in\mathbb{R}^{k}\right\} .
\]
$K$ is the intersection of closed half-spaces, hence a closed convex
set. We always have $M_{X}\subset K$ (since $h_{M_{X}}$ is the support
function of $M_{X}$). Conversely, if $y\notin M_{X}$, the Hahn-Banach
separation theorem provides a $\lambda$ with 
\[
\left\langle \lambda,y\right\rangle >\sup_{z\in M_{X}}\left\langle \lambda,z\right\rangle =h_{M_{X}}\left(\lambda\right),
\]
so $y\notin K$. Hence $K=M_{X}$. Since \eqref{enu:b-1-2} states
exactly that $m\in K$, we conclude that $m\in M_{X}$. Thus \eqref{enu:b1-1}
holds. 

\eqref{enu:b1-1}$\Leftrightarrow$\eqref{enu:b-1-3}. Let $\mathcal{P}:=\left\{ \left|\psi\left\rangle \right\langle \psi\right|:\psi\in\mathbb{C}^{d},\:\left\Vert \psi\right\Vert =1\right\} $
be the set of pure states (rank-1 projections). The extreme points
of $\mathcal{S}_{d}$ are exactly $\mathcal{P}$. Because $\mathcal{S}_{d}=conv\left(\mathcal{P}\right)$
and $T$ is linear, we have 
\begin{align*}
M_{X} & =T\left(\mathcal{S}_{d}\right)=T\left(conv\left(\mathcal{P}\right)\right)=conv\left(T\left(\mathcal{P}\right)\right)\\
 & =conv\left\{ \left(\left\langle \psi,X_{1}\psi\right\rangle ,\dots,\left\langle \psi,X_{k}\psi\right\rangle \right):\left\Vert \psi\right\Vert =1\right\} .
\end{align*}
This is \eqref{enu:b-1-3}. In particular, $m\in M_{X}$ iff $m$
is a convex combination of pure state moment vectors. 
\end{proof}
The following result (\prettyref{prop:b-2}) is standard. Strict concavity
of entropy gives uniqueness; KKT/Lagrange duality yields the exponential
form in the interior; boundary points are limits of interior solutions
via compactness/continuity. See, e.g., \cite{MR87305,MR496300,MR681693,MR1230389},
and \cite{MR274683}. We include a proof for the reader's convenience
and to establish the notation and key concepts used throughout the
paper.
\begin{prop}[maximum entropy under linear constraints]
\label{prop:b-2} Let $X_{1},\dots,X_{k}\in\mathcal{M}_{sa}$ and
$m\in M_{X}$. Then the entropy $S\left(\rho\right)=-tr\left(\rho\log\rho\right)$
has a unique maximizer $\sigma$ on $C\left(m\right)$. If $m\in ri\left(M_{X}\right)$,
the relative interior of $M_{X}$, there exists a unique $\lambda\in\mathbb{R}^{k}$
with 
\[
\sigma=\frac{\exp\left(-\sum^{k}_{i=1}\lambda_{i}X_{i}\right)}{tr\left(\exp\left(-\sum^{k}_{i=1}\lambda_{i}X_{i}\right)\right)},\qquad tr\left(\sigma X_{i}\right)=m_{i},
\]
and $\sigma$ is full rank. If $m\in\partial M_{X}$, the maximizer
is still unique, may be rank-deficient, and arises as a limit of full-rank
Gibbs states associated with $m^{\left(n\right)}\rightarrow m$, $m^{\left(n\right)}\in ri\left(M_{X}\right)$. 
\end{prop}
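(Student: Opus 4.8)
The plan is to treat existence and uniqueness by soft compactness/convexity arguments, and to extract the Gibbs form from convex duality, the crucial coercivity estimate being furnished by the support-function identity \eqref{eq:b2} proved in the lemma. Throughout I write $H_{\lambda}:=\sum_{i}\lambda_{i}X_{i}$, $Z\left(\lambda\right):=tr\left(e^{-H_{\lambda}}\right)$, and $\sigma_{\lambda}:=e^{-H_{\lambda}}/Z\left(\lambda\right)$, which is full rank for every $\lambda\in\mathbb{R}^{k}$. Since $m\in M_{X}$, the feasible set $C\left(m\right)$ is a nonempty intersection of the compact convex set $\mathcal{S}_{d}$ with an affine subspace, hence nonempty, convex, and compact; as $S$ is continuous on $\mathcal{S}_{d}$, it attains its maximum on $C\left(m\right)$, giving existence. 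Uniqueness follows from strict concavity of $S$: were there two distinct maximizers, their midpoint would lie in $C\left(m\right)$ by convexity and carry strictly larger entropy, a contradiction.

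\emph{Reduction to moment matching.} For any $\lambda$ and any $\rho\in C\left(m\right)$, nonnegativity of the relative entropy, together with $\log\sigma_{\lambda}=-H_{\lambda}-\log Z\left(\lambda\right)I$ and $tr\left(\rho X_{i}\right)=m_{i}$, gives
\[
0\le D\left(\rho\,\|\,\sigma_{\lambda}\right)=-S\left(\rho\right)+\left\langle \lambda,m\right\rangle +\log Z\left(\lambda\right),
\]
so that $S\left(\rho\right)\le g\left(\lambda\right):=\log Z\left(\lambda\right)+\left\langle \lambda,m\right\rangle $, with equality iff $\rho=\sigma_{\lambda}$. Hence it suffices to exhibit one $\lambda^{\star}$ with $m\left(\sigma_{\lambda^{\star}}\right)=m$: then $\sigma_{\lambda^{\star}}\in C\left(m\right)$ attains the bound and is the unique, full-rank maximizer. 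The function $g$ is smooth and convex, its Hessian being the Gibbs covariance matrix of $\left(X_{i}\right)$, and $\partial_{\lambda_{i}}g=m_{i}-tr\left(\sigma_{\lambda}X_{i}\right)$; thus moment matching is exactly the stationarity condition $\nabla g\left(\lambda^{\star}\right)=0$.

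\emph{Coercivity of the dual --- the main point.} I expect the heart of the argument to be showing that $g$ attains its infimum at a finite $\lambda^{\star}$. Along a ray $t\mapsto t\lambda$ the sum $Z\left(t\lambda\right)=\sum_{j}e^{-t\mu_{j}}$ over the eigenvalues $\mu_{j}$ of $H_{\lambda}$ is dominated as $t\to\infty$ by the smallest eigenvalue, so $t^{-1}\log Z\left(t\lambda\right)\to-\lambda_{min}\left(H_{\lambda}\right)=\lambda_{max}\left(-H_{\lambda}\right)=h_{M_{X}}\left(-\lambda\right)$ by \eqref{eq:b2}. Consequently the recession function of $g$ equals $h_{M_{X}}\left(-\lambda\right)+\left\langle \lambda,m\right\rangle =h_{M_{X}}\left(\eta\right)-\left\langle \eta,m\right\rangle $ with $\eta=-\lambda$, and this is strictly positive for every $\eta\neq0$ (modulo the orthogonal complement of the affine hull of $M_{X}$) precisely when $m\in ri\left(M_{X}\right)$ --- this is the support-function characterization of the relative interior, and matching it to the dual coercivity is the one genuinely substantive step. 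On the quotient by the subspace $\left\{ c:H_{c}\in\mathbb{R}I\right\} $, along which $g$ is constant, $g$ is therefore strictly convex and coercive, so it attains its infimum at a point unique modulo this subspace; stationarity there produces $\lambda^{\star}$, the Gibbs representation, and uniqueness of $\lambda$ (genuine uniqueness in $\mathbb{R}^{k}$ when $I,X_{1},\dots,X_{k}$ are linearly independent).

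\emph{The boundary case.} For $m\in\partial M_{X}$, fix $z\in ri\left(M_{X}\right)$ and set $m^{\left(n\right)}:=\left(1-t_{n}\right)m+t_{n}z$ with $t_{n}\downarrow0$, so $m^{\left(n\right)}\in ri\left(M_{X}\right)$ and $m^{\left(n\right)}\to m$. The interior case furnishes full-rank Gibbs maximizers $\sigma^{\left(n\right)}$; by compactness of $\mathcal{S}_{d}$ a subsequence converges to some $\sigma_{\infty}$, and continuity of the moment map gives $m\left(\sigma_{\infty}\right)=m$, i.e.\ $\sigma_{\infty}\in C\left(m\right)$, possibly rank-deficient. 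To identify $\sigma_{\infty}$ with the maximizer I would use that the value function $S^{\star}\left(m\right):=\max_{C\left(m\right)}S$ is concave, so $S^{\star}\left(m^{\left(n\right)}\right)\ge\left(1-t_{n}\right)S^{\star}\left(m\right)+t_{n}S^{\star}\left(z\right)$ forces $\liminf_{n}S^{\star}\left(m^{\left(n\right)}\right)\ge S^{\star}\left(m\right)$; combined with continuity of $S$ (giving $S\left(\sigma_{\infty}\right)=\lim_{n}S^{\star}\left(m^{\left(n\right)}\right)$) and the trivial bound $S\left(\sigma_{\infty}\right)\le S^{\star}\left(m\right)$, this yields $S\left(\sigma_{\infty}\right)=S^{\star}\left(m\right)$. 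By the uniqueness already established, $\sigma_{\infty}=\sigma$, completing the proof.
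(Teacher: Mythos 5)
Your proposal is correct and follows the same overall route as the paper: existence and uniqueness from compactness of $C\left(m\right)$ and strict concavity of $S$, the Gibbs form from convex duality with the log-partition function, and the boundary case by approximating $m$ from the relative interior. The difference lies in how the two substantive steps are discharged, and in both places your version is the more complete one. For dual attainment the paper invokes Slater's condition and simply asserts that $\inf_{\lambda}\left(\phi\left(\lambda\right)+\left\langle \lambda,m\right\rangle \right)$ is attained at a finite $\lambda$; you instead compute the recession function of the dual along rays, identify it with $h_{M_{X}}\left(\eta\right)-\left\langle \eta,m\right\rangle $ via \eqref{eq:b2}, and deduce coercivity on the quotient by $\left\{ c:H_{c}\in\mathbb{R}I\right\} $ precisely when $m\in ri\left(M_{X}\right)$ --- this ties the feasibility lemma to the duality in a way the paper leaves implicit, and it also correctly flags that uniqueness of $\lambda$ in $\mathbb{R}^{k}$ requires linear independence of $I,X_{1},\dots,X_{k}$ (the paper's assertion that $\phi$ is strictly convex is false without that hypothesis). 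In the boundary case the paper asserts $\limsup_{n}S\left(\sigma_{n}\right)=\sup_{\rho\in C\left(m\right)}S\left(\rho\right)$ without justification; since the $\sigma_{n}$ maximize over $C\left(m^{\left(n\right)}\right)$ rather than $C\left(m\right)$, this is a continuity claim about the value function that genuinely needs an argument, and your concavity bound $S^{\star}\left(m^{\left(n\right)}\right)\geq\left(1-t_{n}\right)S^{\star}\left(m\right)+t_{n}S^{\star}\left(z\right)$ along the segment supplies exactly the missing inequality $\liminf_{n}S^{\star}\left(m^{\left(n\right)}\right)\geq S^{\star}\left(m\right)$. The only step you leave implicit is that uniqueness of the maximizer upgrades subsequential convergence of $\left(\sigma^{\left(n\right)}\right)$ to convergence of the full sequence; one line noting that every subsequence has a further subsequence converging to $\sigma$ would close this.
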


\begin{proof}
Note the set $C\left(m\right)$ is a closed (linear) slice of the
compact state space and hence compact and convex. The map $\rho\mapsto S\left(\rho\right)$
is continuous and strictly concave on the convex set of states. Therefore,
a maximizer exists and is unique. 

Consider the convex function 
\[
\phi\left(\lambda\right):=\log tr\left(\exp\left(-\sum\nolimits^{k}_{i=1}\lambda_{i}X_{i}\right)\right),\quad\lambda\in\mathbb{R}^{k}.
\]
The function $\phi$ is strictly convex, and its gradient is given
by 
\[
\nabla\phi\left(\lambda\right)=\left(-tr\left(\sigma_{\lambda}X_{1}\right),\dots,-tr\left(\sigma_{\lambda}X_{k}\right)\right),\qquad\sigma_{\lambda}:=\frac{\exp\left(-\sum^{k}_{i=1}\lambda_{i}X_{i}\right)}{tr\left(\exp\left(-\sum^{k}_{i=1}\lambda_{i}X_{i}\right)\right)}.
\]
Define the dual (Legendre-Fenchel transform)
\[
\phi^{*}\left(m\right):=\sup_{\lambda\in\mathbb{R}^{k}}\left\{ -\left\langle \lambda,m\right\rangle -\phi\left(\lambda\right)\right\} .
\]
Then the entropy maximization problem is equivalent to: for any feasible
$\rho$, 
\[
S\left(\rho\right)=\inf_{\lambda}\left\{ \phi\left(\lambda\right)+\left\langle \lambda,m\left(\rho\right)\right\rangle \right\} \leq\phi\left(\lambda\right)+\left\langle \lambda,m\right\rangle ,
\]
with equality exactly when $\rho=\sigma_{\lambda}$ and $tr\left(\sigma_{\lambda}X_{i}\right)=m_{i}$.
Moreover, 
\[
\sup_{\rho\in C\left(m\right)}S\left(\rho\right)=-\inf_{\lambda}\left\{ \phi\left(\lambda\right)+\left\langle \lambda,m\right\rangle \right\} =\phi^{*}\left(m\right).
\]

When $m\in ri\left(M_{X}\right)$, Slater’s condition (existence of
a full-rank feasible $\rho$) holds. Then the dual optimum is attained
at a finite $\lambda$, and first-order optimality yields 
\[
\nabla_{\lambda}\left(\phi\left(\lambda\right)+\left\langle \lambda,m\right\rangle \right)=0\Longleftrightarrow-tr\left(\sigma_{\lambda}X_{i}\right)+m_{i}=0,\quad i=1,\dots,k,
\]
hence $tr\left(\sigma_{\lambda}X_{i}\right)=m_{i}$. Thus the unique
maximizer is $\sigma=\sigma_{\lambda}$ in exponential form. Since
$\exp\left(\cdot\right)$ is strictly positive, $\sigma$ is full
rank. 

(Equivalent “KKT” viewpoint.) Maximizing $S\left(\rho\right)$ over
linear constraints has Lagrangian 
\[
\mathcal{L}\left(\rho,\alpha,\lambda\right)=-tr\left(\rho\log\rho\right)-\alpha\left(tr\left(\rho\right)-1\right)-\sum^{k}_{i=1}\lambda_{i}\left(tr\left(\rho X_{i}\right)-m_{i}\right).
\]
Stationarity in $\rho$ gives $-\log\rho-I-\lambda I-\sum^{k}_{i=1}\lambda_{i}X_{i}=0$,
hence $\rho\propto\exp\left(-\sum_{i}\lambda_{i}X_{i}\right)$. Interior
feasibility ensures finite multipliers and full rank.

If $m\in\partial M_{X}$, existence and uniqueness of a maximizer
still hold as above. The Gibbs form with finite $\lambda$ may fail,
and the maximizer can be rank-deficient. To see it as a limit of full-rank
Gibbs states, pick any sequence $m^{\left(n\right)}\in ri\left(M_{X}\right)$
with $m^{\left(n\right)}\rightarrow m$. For each $n$, using the
interior case above, there is a unique full rank 
\[
\sigma_{n}=\frac{\exp\left(-\sum_{i}\lambda^{\left(n\right)}_{i}X_{i}\right)}{tr\left(\exp\left(-\sum_{i}\lambda^{\left(n\right)}_{i}X_{i}\right)\right)},\qquad tr\left(\sigma_{n}X_{i}\right)=m^{\left(n\right)}_{i}.
\]

The state space is compact, so $\sigma_{n}$ has cluster points. Let
$\sigma$ be any limit point. Since the moment map is continuous,
we get $tr\left(\sigma_{n}X_{i}\right)=m^{\left(n\right)}_{i}\rightarrow m_{i}$,
hence $tr\left(\sigma X_{i}\right)=m_{i}$, so $\sigma\in C\left(m\right)$.
By upper semicontinuity of entropy (indeed continuity in finite dimensions),
\[
S\left(\sigma\right)\geq\limsup_{n\rightarrow\infty}S\left(\sigma_{n}\right)=\sup_{\rho\in C\left(m\right)}S\left(\rho\right),
\]
so $\sigma$ is a maximizer in $C\left(m\right)$. Uniqueness from
above forces all cluster points to coincide, so $\sigma_{n}\rightarrow\sigma$.
Since the constraints pushed to the boundary, $\sigma$ may have reduced
rank (its support sits on the exposed face determined by a supporting
hyperplane of $M_{X}$). 
\end{proof}

\section{Entropy-Constrained Convergence and u.c.p. Stability}\label{sec:3}

This section explains why matching the constraint data is not, by
itself, enough to recover the underlying state, and how entropy closes
that gap. The constraints specify a small collection of observables
and target values. Many states can share those same expectation values.
Among them, one state is singled out by the entropy principle: the
unique state of maximum entropy consistent with the constraints. We
refer to it as the max-entropy state.

Our first goal is a convergence principle. Whenever a sequence of
states reproduces the constraint values more and more accurately and,
at the same time, loses no entropy relative to the max-entropy state,
the sequence must converge to that state. Intuitively, the constraints
pin down the “location” along an affine slice, while entropy selects
the point of least additional structure, and asking for both to hold
in the limit leaves no room for the sequence to drift elsewhere.

A second goal is operational stability. The constraints generate a
natural observable subspace. Any post-processing built from that subspace
(compressions, coarse-grainings, or measurement pipelines implemented
by unital completely positive maps) should preserve the convergence.
In practice, this means that once convergence holds at the level of
states, every downstream statistic built from the constraints will
also converge.

The results below formalize these ideas. \prettyref{thm:c1} gives
a clean sufficient condition for convergence to the max-entropy state.
Then \prettyref{cor:c2} shows that this convergence is stable under
all unital completely positive post-processings on the constraint
subspace. A brief remark records the easy converse of \prettyref{thm:c1}
in finite dimensions, and together these pieces yield a full equivalence.
\begin{thm}
\label{thm:c1}Let $X_{1},\dots,X_{k}\in M_{d}\left(\mathbb{C}\right)_{sa}$
and let $m\in M_{X}$. Let $\sigma\in C\left(m\right)$ be the unique
entropy maximizer from \prettyref{prop:b-2}. Suppose $\left(\rho_{n}\right)\subset\mathcal{S}_{d}$
satisfies:
\begin{enumerate}
\item \label{enu:c1-1}$tr\left(\rho_{n}X_{i}\right)\rightarrow m_{i}$,
$i=1,\dots,k$;
\item \label{enu:c1-2}$S\left(\rho_{n}\right)\rightarrow S\left(\sigma\right)$.
\end{enumerate}
Then 
\[
\left\Vert \rho_{n}-\sigma\right\Vert _{1}\longrightarrow0,
\]
hence, for every $A\in M_{d}\left(\mathbb{C}\right)$, 
\[
tr\left(\left(\rho_{n}-\sigma\right)A\right)\longrightarrow0.
\]
In particular, 
\[
\sup_{A\in V,\left\Vert A\right\Vert \leq1}\left|tr\left(\left(\rho_{n}-\sigma\right)A\right)\right|\longrightarrow0,
\]
where $V=span\left\{ I,X_{1},\dots,X_{k}\right\} $. 
\end{thm}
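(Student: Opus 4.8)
The plan is to bound the trace-norm distance by the quantum relative entropy
\[
D\left(\rho_{n}\|\sigma\right):=tr\left(\rho_{n}\log\rho_{n}\right)-tr\left(\rho_{n}\log\sigma\right)
\]
and then invoke a Pinsker-type inequality. First I would dispose of the generic case $m\in ri\left(M_{X}\right)$, where \prettyref{prop:b-2} presents $\sigma$ in the full-rank Gibbs form $\sigma=e^{-H}/Z$ with $H=\sum_{i}\lambda_{i}X_{i}$ and $Z=tr\left(e^{-H}\right)$. Here $\log\sigma=-H-\left(\log Z\right)I$ is a bona fide matrix, so substituting it into the definition of $D$ and evaluating the same expression at $\rho=\sigma$ (which gives $S\left(\sigma\right)=\log Z+\sum_{i}\lambda_{i}m_{i}$) produces the exact variational identity
\[
D\left(\rho_{n}\|\sigma\right)=\left(S\left(\sigma\right)-S\left(\rho_{n}\right)\right)+\sum_{i=1}^{k}\lambda_{i}\left(tr\left(\rho_{n}X_{i}\right)-m_{i}\right).
\]
This identity is the engine of the argument: it rewrites the relative entropy using only the two quantities the hypotheses control.

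The remainder of the interior case is immediate. Hypothesis \eqref{enu:c1-1} drives each moment mismatch $tr\left(\rho_{n}X_{i}\right)-m_{i}$ to zero, so the finite linear combination vanishes, while hypothesis \eqref{enu:c1-2} drives $S\left(\sigma\right)-S\left(\rho_{n}\right)$ to zero; hence $D\left(\rho_{n}\|\sigma\right)\to0$. The quantum Pinsker inequality $\tfrac{1}{2}\left\Vert \rho_{n}-\sigma\right\Vert _{1}^{2}\leq D\left(\rho_{n}\|\sigma\right)$ then gives $\left\Vert \rho_{n}-\sigma\right\Vert _{1}\to0$, the main assertion. Both displayed consequences follow from the operator-norm duality $\left|tr\left(\left(\rho_{n}-\sigma\right)A\right)\right|\leq\left\Vert \rho_{n}-\sigma\right\Vert _{1}\left\Vert A\right\Vert$; taking the supremum over $A$ in the unit ball of $V$ yields the uniform statement.

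The main obstacle is the boundary case $m\in\partial M_{X}$, where \prettyref{prop:b-2} permits $\sigma$ to be rank-deficient: then $\log\sigma$ is undefined on $\ker\sigma$ and $D\left(\rho_{n}\|\sigma\right)$ may equal $+\infty$, so the identity above is unavailable. To handle every case uniformly I would instead argue by compactness and uniqueness. Since $\mathcal{S}_{d}$ is compact, it suffices to show that every convergent subsequence $\rho_{n_{j}}\to\rho_{*}$ has limit $\sigma$: continuity of the moment map with \eqref{enu:c1-1} forces $\rho_{*}\in C\left(m\right)$, continuity of the entropy with \eqref{enu:c1-2} forces $S\left(\rho_{*}\right)=S\left(\sigma\right)=\max_{C\left(m\right)}S$, and the uniqueness in \prettyref{prop:b-2} then forces $\rho_{*}=\sigma$. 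As all subsequential limits coincide with $\sigma$ in a finite-dimensional space, $\rho_{n}\to\sigma$ in every norm, in particular in $\left\Vert \cdot\right\Vert _{1}$. This compactness route in fact settles the theorem in all cases at once, so I regard the relative-entropy identity less as strictly necessary here than as the quantitative mechanism to be exploited in \prettyref{sec:4}, and I would still present it for that reason.
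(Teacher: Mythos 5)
Your proof is correct. The interior case is essentially identical to the paper's: the same exact variational identity $D\left(\rho_{n}\Vert\sigma\right)=S\left(\sigma\right)-S\left(\rho_{n}\right)+\sum_{i}\lambda_{i}\left(tr\left(\rho_{n}X_{i}\right)-m_{i}\right)$ followed by Pinsker and the duality $\left|tr\left(\left(\rho_{n}-\sigma\right)A\right)\right|\leq\left\Vert \rho_{n}-\sigma\right\Vert _{1}\left\Vert A\right\Vert$. Where you genuinely diverge is the boundary case $m\in\partial M_{X}$. The paper keeps working with relative entropy: it approximates $\sigma$ by the interior Gibbs states $\sigma^{\left(j\right)}$ with multipliers $\lambda^{\left(j\right)}$, applies the identity to each $\sigma^{\left(j\right)}$, passes to the limit in $n$ and then in $j$, and invokes lower semicontinuity of $D\left(\cdot\Vert\cdot\right)$ in the second argument to conclude $D\left(\rho_{n}\Vert\sigma\right)\rightarrow0$ before applying Pinsker once more. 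You instead run a soft compactness-and-uniqueness argument: every subsequential limit $\rho_{*}$ of $\left(\rho_{n}\right)$ lies in $C\left(m\right)$ by continuity of the moment map, attains $S\left(\rho_{*}\right)=S\left(\sigma\right)=\max_{C\left(m\right)}S$ by continuity of entropy in finite dimensions, and hence equals $\sigma$ by the uniqueness in \prettyref{prop:b-2}; since all norms on $M_{d}\left(\mathbb{C}\right)$ are equivalent, $\left\Vert \rho_{n}-\sigma\right\Vert _{1}\rightarrow0$. This is valid and, as you note, actually dispatches all cases at once, so it is shorter and more elementary than the paper's route. What the paper's argument buys in exchange is the stronger conclusion $D\left(\rho_{n}\Vert\sigma\right)\rightarrow0$, which is not implied by trace-norm convergence when $\sigma$ is rank-deficient (the relative entropy can be $+\infty$ for states arbitrarily close to $\sigma$ in trace norm), and it keeps the whole development on the relative-entropy track that \prettyref{sec:4} later quantifies. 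Your closing remark that the identity is retained for its quantitative role rather than out of necessity is exactly the right assessment.
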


\begin{proof}
Suppose $m\in ri\left(M_{X}\right)$. By \prettyref{prop:b-2}, there
is a unique $\lambda\in\mathbb{R}^{k}$ such that 
\[
\sigma=\frac{\exp\left(-\sum^{k}_{i=1}\lambda_{i}X_{i}\right)}{tr\left(\exp\left(-\sum^{k}_{i=1}\lambda_{i}X_{i}\right)\right)},\qquad tr\left(\sigma X_{i}\right)=m_{i}.
\]
For every state $\rho\in\mathcal{S}_{d}$ one has the exact variational
identity 
\begin{equation}
D\left(\rho\Vert\sigma\right)=S\left(\sigma\right)-S\left(\rho\right)+\sum^{k}_{i=1}\lambda_{i}\left(tr\left(\rho X_{i}\right)-m_{i}\right),\label{eq:c1}
\end{equation}
where $D\left(\rho\Vert\sigma\right)=tr\left(\rho\left(\log\rho-\log\sigma\right)\right)$
is the relative entropy. (This follows from a direct calculation.
For details, also see \prettyref{lem:exact-identity}). 

Apply \eqref{eq:c1} to $\rho=\rho_{n}$. By assumptions \eqref{enu:c1-1}-\eqref{enu:c1-2},
\[
S\left(\rho\right)-S\left(\rho_{n}\right)\longrightarrow0,\quad\sum^{k}_{i=1}\lambda_{i}\left(tr\left(\rho_{n}X_{i}\right)-m_{i}\right)\longrightarrow0,
\]
hence $D\left(\rho_{n}\Vert\sigma\right)\rightarrow0$. Quantum Pinsker
inequality with natural logs (see e.g., \cite[Thm 5.38]{Watrous_2018}),
\[
\frac{1}{2}\left\Vert \rho-\sigma\right\Vert ^{2}_{1}\leq D\left(\rho\Vert\sigma\right),
\]
now gives $\left\Vert \rho_{n}-\sigma\right\Vert _{1}\rightarrow0$.
The convergence of $tr\left(\left(\rho_{n}-\sigma\right)A\right)$
for each fixed $A$ follows by Holder's inequality: 
\[
\left|tr\left(\left(\rho_{n}-\sigma\right)A\right)\right|\leq\left\Vert \rho_{n}-\sigma\right\Vert _{1}\left\Vert A\right\Vert .
\]
This proves the theorem in the interior case. 

Next, assume $m\in\partial M_{X}$. Pick any sequence $m^{\left(j\right)}\in ri\left(M_{X}\right)$
with $m^{\left(j\right)}\rightarrow m$. Let $\sigma^{\left(j\right)}$
be the unique full-rank Gibbs maximizer in $C\left(m^{\left(j\right)}\right)$,
so that 
\[
\sigma^{\left(j\right)}=\frac{\exp\left(-\sum^{k}_{i=1}\lambda^{\left(i\right)}_{j}X_{i}\right)}{tr\left(\exp\left(-\sum^{k}_{i=1}\lambda^{\left(j\right)}_{j}X_{i}\right)\right)},\qquad tr\left(\sigma^{\left(j\right)}X_{i}\right)=m^{\left(j\right)}_{i}.
\]
By \prettyref{prop:b-2}, $\sigma^{\left(j\right)}\rightarrow\sigma$
and $S\left(\sigma^{\left(j\right)}\right)\rightarrow S\left(\sigma\right)$. 

Fix $j$. Apply the interior case identity \eqref{eq:c1} with $\sigma^{\left(j\right)}$
and its multipliers $\lambda^{\left(j\right)}$ to $\rho=\rho_{n}$,
we get 
\begin{equation}
D\left(\rho_{n}\Vert\sigma^{\left(j\right)}\right)=S(\sigma^{\left(j\right)})-S\left(\rho_{n}\right)+\sum^{k}_{i=1}\lambda^{\left(j\right)}_{i}\left(tr\left(\rho_{n}X_{i}\right)-m^{\left(j\right)}_{i}\right).\label{eq:c2}
\end{equation}
Letting $n\rightarrow\infty$, using $S\left(\rho_{n}\right)\rightarrow S\left(\sigma\right)$
and $tr\left(\rho_{n}X_{i}\right)\rightarrow m_{i}$, we obtain 
\begin{equation}
\limsup_{n\rightarrow\infty}D\left(\rho_{n}\Vert\sigma^{\left(j\right)}\right)\leq S(\sigma^{\left(j\right)})-S\left(\sigma\right)+\sum^{k}_{i=1}\lambda^{\left(j\right)}_{i}\left(m_{i}-m^{\left(j\right)}_{i}\right).\label{eq:c3}
\end{equation}
Now let $j\rightarrow\infty$. Since $S\left(\sigma^{\left(j\right)}\right)\rightarrow S\left(\sigma\right)$
and $m^{\left(j\right)}\rightarrow m$, the right-hand side of \eqref{eq:c3}
tends to 0. Therefore, 
\begin{equation}
\lim_{j\rightarrow\infty}\limsup_{n\rightarrow\infty}D\left(\rho_{n}\Vert\sigma^{\left(j\right)}\right)=0.\label{eq:c4}
\end{equation}

Finally, use lower semicontinuity of relative entropy in the second
argument and the convergence $\sigma^{\left(j\right)}\rightarrow\sigma$,
we have 
\[
D\left(\rho_{n}\Vert\sigma\right)\leq\liminf_{n\rightarrow\infty}D\left(\rho_{n}\Vert\sigma^{\left(j\right)}\right).
\]
Taking $\limsup_{n\rightarrow\infty}$ on both sides and combining
with \eqref{eq:c4} gives 
\[
\limsup_{n\rightarrow\infty}D\left(\rho_{n}\Vert\sigma\right)\leq\lim_{j\rightarrow\infty}\limsup_{n\rightarrow\infty}D\left(\rho_{n}\Vert\sigma^{\left(j\right)}\right)=0,
\]
so $D\left(\rho_{n}\Vert\sigma\right)\rightarrow0$. Quantum Pinsker
now gives $\left\Vert \rho_{n}-\sigma\right\Vert _{1}\rightarrow0$,
and the linear-observable convergence follows as in the interior case. 
\end{proof}
\prettyref{thm:c1} leaves us with two immediate payoffs. First, it
upgrades convergence of the constraint data and entropy into convergence
of the full state, so every observable stabilizes in the limit. Second,
the conclusion is robust under post-processing in the sense that any
unital completely positive map on the constraint subspace carries
convergent inputs to convergent outputs. We state this as a corollary
below. For completeness, we also add a short remark showing the converse
direction is automatic in finite dimensions, so the two conditions
are in fact equivalent.
\begin{cor}[stability under u.c.p. post-processing]
\label{cor:c2} Let $X_{1},\dots,X_{k}\in M_{d}\left(\mathbb{C}\right)_{sa}$,
$V=span\left\{ I,X_{1},\dots,X_{k}\right\} $, and let $\sigma\in C\left(m\right)$
be the max-entropy state. Assume $\left(\rho_{n}\right)\subset\mathcal{S}_{d}$
satisfies the hypotheses of \prettyref{thm:c1}, so that $\left\Vert \rho_{n}-\sigma\right\Vert _{1}\rightarrow0$. 
\begin{enumerate}
\item \label{enu:c-2-1}(Channel stability) For every completely positive
trace-preserving map (CPTP channel) $\Phi:M_{d}\left(\mathbb{C}\right)\rightarrow M_{r}\left(\mathbb{C}\right)$,
\[
\left\Vert \Phi\left(\rho_{n}\right)-\Phi\left(\sigma\right)\right\Vert _{1}\longrightarrow0.
\]
\item \label{enu:c-2-2}(Operator system stability) Let $T:V\rightarrow M_{r}\left(\mathbb{C}\right)$
be any unital completely positive (u.c.p.) map. Then there exists
a CPTP map $\Phi:M_{d}\left(\mathbb{C}\right)\rightarrow M_{r}\left(\mathbb{C}\right)$
such that, for all $B\in M_{r}\left(\mathbb{C}\right)$ and $\rho\in\mathcal{S}_{d}$,
\[
tr\left(\Phi\left(\rho\right)B\right)=tr(\rho\tilde{T}^{*}\left(B\right)),
\]
where $\tilde{T}:M_{d}\left(\mathbb{C}\right)\rightarrow M_{r}\left(\mathbb{C}\right)$
is any u.c.p. extension of $T$ (exists by Arveson's extension theorem).
Consequently, 
\[
\left\Vert \Phi\left(\rho_{n}\right)-\Phi\left(\sigma\right)\right\Vert _{1}\longrightarrow0,
\]
and, in particular, 
\[
\sup_{\left\Vert B\right\Vert \leq1}\left|tr\left(\left(\rho_{n}-\sigma\right)\tilde{T}^{*}\left(B\right)\right)\right|\longrightarrow0.
\]
\end{enumerate}
\end{cor}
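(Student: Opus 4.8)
The plan is to dispatch the two parts in turn: part~\eqref{enu:c-2-1} reduces to the fact that quantum channels do not increase trace distance, while part~\eqref{enu:c-2-2} first manufactures the relevant channel from the constraint data via Arveson's theorem and then combines part~\eqref{enu:c-2-1} with a one-line H\"older estimate.

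For \eqref{enu:c-2-1}, the single ingredient I need is the data-processing inequality for the trace norm: any CPTP map is a contraction for $\left\Vert \cdot\right\Vert _{1}$ on self-adjoint differences. I would establish this from the Jordan--Hahn decomposition $\rho_{n}-\sigma=P_{n}-Q_{n}$, where $P_{n},Q_{n}\geq0$ have orthogonal supports, so that $\left\Vert \rho_{n}-\sigma\right\Vert _{1}=tr\left(P_{n}\right)+tr\left(Q_{n}\right)$. Applying $\Phi$, using complete positivity ($\Phi\left(P_{n}\right),\Phi\left(Q_{n}\right)\geq0$), trace preservation, and the triangle inequality yields $\left\Vert \Phi\left(\rho_{n}\right)-\Phi\left(\sigma\right)\right\Vert _{1}\leq tr\left(\Phi\left(P_{n}\right)\right)+tr\left(\Phi\left(Q_{n}\right)\right)=tr\left(P_{n}\right)+tr\left(Q_{n}\right)=\left\Vert \rho_{n}-\sigma\right\Vert _{1}$. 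Since $\left\Vert \rho_{n}-\sigma\right\Vert _{1}\to0$ by \prettyref{thm:c1}, the conclusion follows at once.

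For \eqref{enu:c-2-2}, I would first apply Arveson's extension theorem to extend the u.c.p. map $T\colon V\to M_{r}\left(\mathbb{C}\right)$ to a u.c.p. map $\tilde{T}\colon M_{d}\left(\mathbb{C}\right)\to M_{r}\left(\mathbb{C}\right)$ on the whole algebra; this is the only nonelementary input. The remaining work is duality bookkeeping. Passing to the Hilbert--Schmidt adjoint $\tilde{T}^{*}\colon M_{r}\left(\mathbb{C}\right)\to M_{d}\left(\mathbb{C}\right)$, complete positivity is preserved under adjoints, while the unitality $\tilde{T}\left(I_{d}\right)=I_{r}$ becomes trace preservation of $\tilde{T}^{*}$, since $tr\left(\tilde{T}^{*}\left(B\right)\right)=tr\left(\tilde{T}\left(I_{d}\right)B\right)=tr\left(B\right)$. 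The adjoint relation $tr\left(\rho\,\tilde{T}^{*}\left(B\right)\right)=tr\left(\tilde{T}\left(\rho\right)B\right)$ then identifies the post-processing map $\Phi$ on states and verifies the stated identity $tr\left(\Phi\left(\rho\right)B\right)=tr\left(\rho\,\tilde{T}^{*}\left(B\right)\right)$. Because $\Phi$ is a fixed completely positive linear map between finite-dimensional spaces, it is trace-norm continuous, so $\left\Vert \rho_{n}-\sigma\right\Vert _{1}\to0$ forces $\left\Vert \Phi\left(\rho_{n}\right)-\Phi\left(\sigma\right)\right\Vert _{1}\to0$. For the final uniform statement I would simply use H\"older: $\left|tr\left(\left(\rho_{n}-\sigma\right)\tilde{T}^{*}\left(B\right)\right)\right|\leq\left\Vert \rho_{n}-\sigma\right\Vert _{1}\left\Vert \tilde{T}^{*}\left(B\right)\right\Vert \leq\left\Vert \tilde{T}^{*}\right\Vert \left\Vert \rho_{n}-\sigma\right\Vert _{1}$ for every $B$ with $\left\Vert B\right\Vert \leq1$, and take the supremum before letting $n\to\infty$.

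The main obstacle is conceptual rather than computational: it is the Arveson extension step, which turns the a priori partial map $T$, defined only on the operator system $V$, into a globally defined channel and thereby makes the whole corollary possible. Everything after that is duality bookkeeping---keeping the directions of $\tilde{T}$ and $\tilde{T}^{*}$ straight and correctly translating ``unital'' on one side into ``trace-preserving'' on the other---together with the contraction/continuity of a fixed linear map and the H\"older estimate. I would also note that the convergence conclusions in \eqref{enu:c-2-2} use nothing from \prettyref{thm:c1} beyond the trace-norm convergence $\left\Vert \rho_{n}-\sigma\right\Vert _{1}\to0$ that it supplies.
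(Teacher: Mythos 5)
Your proposal is correct in all of its convergence claims and follows the same overall architecture as the paper: trace-norm contractivity for part \eqref{enu:c-2-1}, Arveson extension plus duality for part \eqref{enu:c-2-2}. The one genuine methodological difference is in part \eqref{enu:c-2-1}: you prove $\left\Vert \Phi\left(X\right)\right\Vert _{1}\leq\left\Vert X\right\Vert _{1}$ from the Jordan decomposition $X=P-Q$ together with positivity and trace preservation, whereas the paper dualizes, writing $\left\Vert \Phi\left(X\right)\right\Vert _{1}=\sup_{\left\Vert B\right\Vert \leq1}\left|tr\left(\Phi^{*}\left(B\right)X\right)\right|$ and using that the adjoint of a CPTP map is unital positive, hence an operator-norm contraction. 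Both are standard; your route is slightly more elementary and makes visible that only positivity, not complete positivity, is used.

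In part \eqref{enu:c-2-2} there is a subtlety that you handle more robustly than the paper, though neither of you confronts it head-on. The identity $tr\left(\Phi\left(\rho\right)B\right)=tr\left(\rho\tilde{T}^{*}\left(B\right)\right)$ for all $B$ forces $\Phi=\left(\tilde{T}^{*}\right)^{*}=\tilde{T}$, and a u.c.p.\ map $\tilde{T}:M_{d}\left(\mathbb{C}\right)\rightarrow M_{r}\left(\mathbb{C}\right)$ need not be trace-preserving (take $\tilde{T}\left(A\right)=\left\langle \psi,A\psi\right\rangle I_{r}$ with $r\geq2$). The genuinely CPTP object is the adjoint $\tilde{T}^{*}:M_{r}\left(\mathbb{C}\right)\rightarrow M_{d}\left(\mathbb{C}\right)$, exactly as you compute, but it points in the opposite direction from the $\Phi$ of the statement; the paper's ``one checks that $\Phi$ is CPTP'' is therefore not quite right as written. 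Your fallback --- $\Phi$ is a fixed completely positive linear map between finite-dimensional spaces, hence trace-norm bounded --- together with the H\"older estimate $\left|tr\left(\left(\rho_{n}-\sigma\right)\tilde{T}^{*}\left(B\right)\right)\right|\leq\left\Vert \tilde{T}^{*}\right\Vert \left\Vert \rho_{n}-\sigma\right\Vert _{1}$, delivers every stated convergence conclusion without ever invoking trace preservation of $\Phi$, so your argument is sound; just be aware that you have (correctly) proved the limits but not the literal clause asserting that $\Phi$ is CPTP, and that the resulting constant $\left\Vert \tilde{T}^{*}\right\Vert $ is finite but not dimension-independent.
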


\begin{proof}
We use two standard facts:

\textbf{F1}. Let $\Phi$ be CPTP, and let $\Phi^{*}$ be its adjoint,
i.e., $tr\left(\Phi\left(\rho\right)B\right)=tr\left(\rho\Phi^{*}\left(B\right)\right)$.
Then a standard result for unital positive maps (see, e.g., \cite{MR193530,MR1976867})
$\Phi^{*}$ is u.c.p. and 
\[
\left\Vert \Phi^{*}\left(B\right)\right\Vert \leq\left\Vert B\right\Vert ,\quad\forall B.
\]
Using 
\[
\left\Vert X\right\Vert _{1}=\sup_{\left\Vert B\right\Vert \leq1}\left|tr\left(BX\right)\right|,
\]
we get, for any self-adjoint $X$, 
\begin{align*}
\left\Vert \Phi\left(X\right)\right\Vert _{1} & =\sup_{\left\Vert B\right\Vert \leq1}tr\left(B\Phi\left(X\right)\right)=\sup_{\left\Vert B\right\Vert \leq1}tr\left(\Phi^{*}\left(B\right)X\right)\\
 & \leq\sup_{\left\Vert A\right\Vert \leq1}tr\left(AX\right)=\left\Vert X\right\Vert _{1},
\end{align*}
since $\left\{ \Phi^{*}\left(B\right):\left\Vert B\right\Vert \leq1\right\} \subset\left\{ A:\left\Vert A\right\Vert \leq1\right\} $.

\textbf{F2}. If $T:V\rightarrow M_{r}\left(\mathbb{C}\right)$ is
u.c.p. and $V\subset M_{d}\left(\mathbb{C}\right)$ is an operator
system (unital, $*$-closed subspace), then there exists a u.c.p.
extension $\tilde{T}:M_{d}\left(\mathbb{C}\right)\rightarrow M_{r}\left(\mathbb{C}\right)$
(see e.g., \cite{MR247483} and \cite[Theorem 6.2]{MR1976867}). 

Now we prove the assertions. 

\eqref{enu:c-2-1} This follows with $X=\rho_{n}-\sigma$, i.e., 
\[
\left\Vert \Phi\left(\rho\right)-\Phi\left(\sigma\right)\right\Vert _{1}\leq\left\Vert \rho-\sigma\right\Vert _{1}\xrightarrow[n\rightarrow\infty]{}0.
\]

\eqref{enu:c-2-2} By F2, extend $T$ to u.c.p. $\tilde{T}:M_{d}\left(\mathbb{C}\right)\rightarrow M_{r}\left(\mathbb{C}\right)$.
Define $\Phi$ such that 
\[
tr\left(\Phi\left(\rho\right)B\right):=tr(\rho\tilde{T}^{*}\left(B\right)),\quad\forall\rho\in M_{d}\left(\mathbb{C}\right),\ B\in M_{r}\left(\mathbb{C}\right).
\]
One checks that $\Phi$ is CPTP. So by \eqref{enu:c-2-1}, 
\[
\left\Vert \Phi\left(\rho_{n}\right)-\Phi\left(\sigma\right)\right\Vert _{1}\xrightarrow[n\rightarrow\infty]{}0.
\]
Finally, using the dual form of $\left\Vert \cdot\right\Vert _{1}$
again and the contractivity of $\tilde{T}$ on operator norm, 
\begin{align*}
\sup_{\left\Vert B\right\Vert \le1}\left|tr\left(\left(\rho_{n}-\sigma\right)\tilde{T}^{*}\left(B\right)\right)\right| & =\sup_{\left\Vert B\right\Vert \le1}\left|tr\left(\Phi\left(\rho_{n}-\sigma\right)B\right)\right|\\
 & =\left\Vert \Phi\left(\rho_{n}-\sigma\right)\right\Vert _{1}\xrightarrow[n\rightarrow\infty]{}0,
\end{align*}
as claimed. 
\end{proof}
\begin{rem}
The converse direction of \prettyref{thm:c1} is also true. More precisely,
let $X_{1},\dots,X_{k}\in M_{d}\left(\mathbb{C}\right)_{sa}$ and
$m\in M_{X}$. Let $\sigma\in C\left(m\right)$ be the unique max-entropy
state. For a sequence $\left(\rho_{n}\right)\subset\mathcal{S}_{d}$,
the following are equivalent:
\begin{enumerate}
\item \label{enu:c-3-1}$\left\Vert \rho_{n}-\sigma\right\Vert _{1}\rightarrow0$
\item \label{enu:c-3-2}$m\left(\rho_{n}\right)\rightarrow m$ (i.e., $tr\left(\rho_{n}X_{i}\right)\rightarrow m_{i}$
for all $i$) and $S\left(\rho_{n}\right)\rightarrow S\left(\sigma\right)$.
\end{enumerate}
\end{rem}

\begin{proof}
\eqref{enu:c-3-1}$\Rightarrow$\eqref{enu:c-3-2} For each $i$,
\[
\left|tr\left(\rho_{n}X_{i}\right)-tr\left(\sigma X_{i}\right)\right|=\left|tr\left(\left(\rho_{n}-\sigma\right)X_{i}\right)\right|\leq\left\Vert \rho_{n}-\sigma\right\Vert _{1}\left\Vert X_{i}\right\Vert \xrightarrow[n\rightarrow\infty]{}0.
\]
Hence $m\left(\rho_{n}\right)\rightarrow m$. 

Note that in finite dimensions the von Neumann entropy is continuous
in trace norm. In fact, by the Fannes-Audenaert bound (see e.g., \cite{MR345574,MR2344161,MR3645110,Watrous_2018}),
if $\delta_{n}:=\frac{1}{2}\left\Vert \rho_{n}-\sigma\right\Vert _{1}$,
then 
\[
\left|S\left(\rho_{n}\right)-S\left(\sigma\right)\right|\leq\delta_{n}\log\left(d-1\right)+h_{2}\left(\delta_{n}\right)\xrightarrow[n\rightarrow\infty]{}0,
\]
where $h_{2}$ is the binary entropy. Hence $S\left(\rho_{n}\right)\rightarrow S\left(\sigma\right)$. 

Together with \prettyref{thm:c1}, this yields the full equivalence. 
\end{proof}

\section{Quantitative Rates}\label{sec:4}

This section turns qualitative convergence into numerical error bounds.
The main tool is \eqref{eq:d1}, which acts as an energy that upper-bounds
the trace distance via Pinsker. We first record the exact identity
and then convert it into concrete estimates on $\left\Vert \rho-\sigma\right\Vert _{1}$
and on observable deviations $tr\left(\left(\rho-\sigma\right)A\right)$.
We avoid differential geometric arguments, and all bounds are finite-dimensional
and self-contained.

\textbf{Convention}. Whenever a formula involves multipliers $\lambda$,
we implicitly assume $m\in ri\left(M_{X}\right)$. Statements that
do not use $\lambda$ hold for all $m\in M_{X}$. 
\begin{lem}[Exact identity for the entropy gap]
\label{lem:exact-identity}Assume $m\in M_{X}$, and let $\sigma\in C\left(m\right)$
be the maximizer. If $m\in ri\left(M_{X}\right)$, let $\lambda\in\mathbb{R}^{k}$
be the (unique) multipliers with 
\[
\sigma=\frac{\exp\left(-\sum^{k}_{i=1}\lambda_{i}X_{i}\right)}{tr\left(\exp\left(-\sum^{k}_{i=1}\lambda_{i}X_{i}\right)\right)},\qquad tr\left(\sigma X_{i}\right)=m_{i}.
\]
Then for every state $\rho\in\mathcal{S}_{d}$, we have 
\begin{equation}
D\left(\rho\Vert\sigma\right)=S\left(\sigma\right)-S\left(\rho\right)+\sum^{k}_{i=1}\lambda_{i}\left(tr\left(\rho X_{i}\right)-m_{i}\right).\label{eq:d1}
\end{equation}
In particular, 
\[
D\left(\rho\Vert\sigma\right)=S\left(\sigma\right)-S\left(\rho\right),\quad\forall\rho\in C\left(m\right).
\]
\end{lem}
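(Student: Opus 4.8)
The plan is to reduce everything to the explicit Gibbs form of $\sigma$ together with a one-line manipulation of its logarithm. Writing $H_{\lambda}=\sum_{i=1}^{k}\lambda_{i}X_{i}$ and $Z=tr\left(\exp\left(-H_{\lambda}\right)\right)$ for the partition function, the defining formula for $\sigma$ gives $\log\sigma=-H_{\lambda}-\left(\log Z\right)I$, since $Z>0$ is a scalar. This is the only structural input; everything else is bookkeeping with the trace, so I expect no difficulty in the interior case $m\in ri\left(M_{X}\right)$, where $\sigma$ is full rank and $\log\sigma$ is defined on all of $\mathbb{C}^{d}$.

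First I would expand the relative entropy as $D\left(\rho\Vert\sigma\right)=tr\left(\rho\log\rho\right)-tr\left(\rho\log\sigma\right)=-S\left(\rho\right)-tr\left(\rho\log\sigma\right)$ and substitute the expression for $\log\sigma$. Using $tr\left(\rho\right)=1$ and linearity, $-tr\left(\rho\log\sigma\right)=tr\left(\rho H_{\lambda}\right)+\log Z=\sum_{i}\lambda_{i}tr\left(\rho X_{i}\right)+\log Z$, so that $D\left(\rho\Vert\sigma\right)=-S\left(\rho\right)+\sum_{i}\lambda_{i}tr\left(\rho X_{i}\right)+\log Z$ for every $\rho\in\mathcal{S}_{d}$. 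It then remains only to identify the constant $\log Z$, which I would do by evaluating the same expression at $\rho=\sigma$: since $D\left(\sigma\Vert\sigma\right)=0$ and $tr\left(\sigma X_{i}\right)=m_{i}$, this forces $\log Z=S\left(\sigma\right)-\sum_{i}\lambda_{i}m_{i}$ (equivalently, one computes $S\left(\sigma\right)=-tr\left(\sigma\log\sigma\right)=tr\left(\sigma H_{\lambda}\right)+\log Z=\sum_{i}\lambda_{i}m_{i}+\log Z$ directly). Substituting back yields exactly \eqref{eq:d1}. The ``in particular'' statement is then immediate: for $\rho\in C\left(m\right)$ one has $tr\left(\rho X_{i}\right)=m_{i}$, so the linear term $\sum_{i}\lambda_{i}\left(tr\left(\rho X_{i}\right)-m_{i}\right)$ vanishes and $D\left(\rho\Vert\sigma\right)=S\left(\sigma\right)-S\left(\rho\right)$.

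The only genuine subtlety, and the step I expect to be the main obstacle, is the boundary case $m\in\partial M_{X}$ for the final identity: there $\sigma$ may be rank-deficient, $\log\sigma$ is undefined off $\mathrm{supp}\left(\sigma\right)$, and a priori $D\left(\rho\Vert\sigma\right)$ could even be $+\infty$. The limiting argument of \prettyref{prop:b-2} is awkward here, because the interior multipliers blow up while $m^{\left(j\right)}\to m$, leaving the linear term indeterminate. Instead I would argue by support. A boundary point $m$ admits a supporting hyperplane: by \eqref{eq:b2} there is $\lambda\neq0$ with $\left\langle \lambda,m\right\rangle =h_{M_{X}}\left(\lambda\right)=\lambda_{max}\left(H_{\lambda}\right)$. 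For any $\rho\in C\left(m\right)$ this gives $tr\left(\rho H_{\lambda}\right)=\left\langle \lambda,m\right\rangle =\lambda_{max}\left(H_{\lambda}\right)$, and since $H_{\lambda}\leq\lambda_{max}\left(H_{\lambda}\right)I$ with $\rho\geq0$, equality in $tr\left(\rho\left(\lambda_{max}\left(H_{\lambda}\right)I-H_{\lambda}\right)\right)=0$ forces $\mathrm{supp}\left(\rho\right)$ into the top eigenspace of $H_{\lambda}$. Intersecting the top eigenspaces of all active supporting hyperplanes at $m$ cuts out an exposed face $E=P\mathbb{C}^{d}$; one shows $\mathrm{supp}\left(\sigma\right)=E$ (with $m$ in the relative interior of the compressed moment set on $E$) and $\mathrm{supp}\left(\rho\right)\subseteq E$ for every $\rho\in C\left(m\right)$. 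Restricting to $PM_{d}P$, the state $\sigma$ is full rank and of Gibbs form for the compressed constraints $\{PX_{i}P\}$, so the interior computation applies verbatim inside this subalgebra and the clean identity $D\left(\rho\Vert\sigma\right)=S\left(\sigma\right)-S\left(\rho\right)$ follows, the compressed moment and trace terms again cancelling. Making this face-reduction rigorous — in particular verifying that $\mathrm{supp}\left(\sigma\right)$ is exactly the exposed face, so that all feasible $\rho$ are dominated by $\sigma$ and the relative entropy stays finite — is where the real work lies.
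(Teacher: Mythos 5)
Your interior-case computation is exactly the paper's proof: write $\log\sigma=-H_{\lambda}-\left(\log Z\right)I$, expand $D\left(\rho\Vert\sigma\right)=-S\left(\rho\right)+tr\left(\rho H_{\lambda}\right)+\log Z$, and identify the constant via $S\left(\sigma\right)=\sum_{i}\lambda_{i}m_{i}+\log Z$; nothing to add there. Where you diverge is the boundary case of the ``in particular'' claim: the paper's proof simply stops at the interior case, even though its stated convention asserts the $\lambda$-free identity for all $m\in M_{X}$, so you have correctly isolated a step the paper leaves unproved rather than one it handles differently. Two cautions about your sketch of that step. First, the intersection of the top eigenspaces of the supporting hyperplanes active at $m$ identifies the smallest \emph{exposed} face of $M_{X}$ containing $m$, which can be strictly larger than the minimal face containing $m$; after one compression $m$ need not lie in the relative interior of the compressed moment set, so you would have to iterate the reduction (it terminates since the dimension drops). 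Second, there is a cleaner route that avoids faces of $M_{X}$ altogether: show that the entropy maximizer $\sigma$ has maximal support among states in $C\left(m\right)$ — if some feasible $\rho$ had $\mathrm{supp}\left(\rho\right)\not\subseteq\mathrm{supp}\left(\sigma\right)$, the one-sided derivative $\frac{d}{dt}S\left(\left(1-t\right)\sigma+t\rho\right)\big|_{t=0^{+}}=+\infty$ would contradict maximality of $\sigma$. This gives at once that $\mathrm{supp}\left(\rho\right)\subseteq\mathrm{supp}\left(\sigma\right)$ for every $\rho\in C\left(m\right)$ (so $D\left(\rho\Vert\sigma\right)<\infty$), and that the problem compressed to $P=\mathrm{supp}\left(\sigma\right)$ admits the full-rank feasible point $\sigma$, hence satisfies Slater's condition; your interior computation then applies verbatim in $PM_{d}P$ and the linear term cancels for $\rho\in C\left(m\right)$. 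With either repair your argument is complete and strictly stronger than what the paper writes down.
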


\begin{proof}
For interior $m$, write $\log\sigma=-\sum^{k}_{i=1}\lambda_{i}X_{i}-\phi\left(\lambda\right)I$
with 
\[
\phi\left(\lambda\right)=\log\left(tr\left(\exp\left(-\sum\nolimits^{k}_{i=1}\lambda_{i}X_{i}\right)\right)\right).
\]
Then 
\begin{align*}
D\left(\rho\Vert\sigma\right) & =tr\left(\rho\log\rho\right)-tr\left(\rho\log\sigma\right)\\
 & =-S\left(\rho\right)+\sum_{i}\lambda_{i}tr\left(\rho X_{i}\right)+\phi\left(\lambda\right),
\end{align*}
while $S\left(\sigma\right)=\phi\left(\lambda\right)+\sum_{i}\lambda_{i}m_{i}$,
giving the identity. 
\end{proof}
The exact identity \eqref{eq:d1} relates three kinds of errors: entropy
gap $S\left(\sigma\right)-S\left(\rho\right)$, moment mismatch $\Delta m=m\left(\rho\right)-m$,
and state error $\left\Vert \rho-\sigma\right\Vert _{1}$. Pinsker’s
inequality turns control of the first two into control of the third
(see \prettyref{prop:pinsker} below). In the exact moment case ($\rho\in C\left(m\right)$),
the linear term vanishes and $\left\Vert \rho-\sigma\right\Vert _{1}$
is governed purely by the entropy gap. In the approximate moment regime
(interior case), the linear correction $\lambda\cdot\Delta m$ quantifies
how sensitive the maximum-entropy projection is to small moment errors.
\begin{prop}[Pinsker rates]
 \label{prop:pinsker} \ 
\begin{enumerate}
\item For any state $\rho\in C\left(m\right)$ and maximizer $\sigma\in C\left(m\right)$,
\begin{equation}
\left\Vert \rho-\sigma\right\Vert _{1}\leq\sqrt{2\left(S\left(\sigma\right)-S\left(\rho\right)\right)}.\label{eq:d-1}
\end{equation}
\item Assume $m\in ri\left(M_{X}\right)$ so that $\sigma$ has finite multipliers
$\lambda\in\mathbb{R}^{k}$ as in \prettyref{lem:exact-identity}.
For any state $\rho\in\mathcal{S}_{d}$ with $\Delta m:=m\left(\rho\right)-m$,
\begin{align}
\left\Vert \rho-\sigma\right\Vert _{1} & \leq\sqrt{2\left|S\left(\sigma\right)-S\left(\rho\right)\right|}+\sqrt{2\left\Vert \lambda\right\Vert \left\Vert \Delta m\right\Vert }.\label{eq:d3}
\end{align}
\end{enumerate}
\end{prop}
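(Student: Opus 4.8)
The plan is to combine the exact identity \eqref{eq:d1} from \prettyref{lem:exact-identity} with the quantum Pinsker inequality $\frac{1}{2}\left\Vert \rho-\sigma\right\Vert _{1}^{2}\leq D\left(\rho\Vert\sigma\right)$, exactly as invoked in the proof of \prettyref{thm:c1}. Both parts reduce to bounding the relative entropy and then taking square roots; no new machinery is required. For part (1), I would start from the special case of \eqref{eq:d1}: since $\rho\in C\left(m\right)$, the moment-mismatch term vanishes and $D\left(\rho\Vert\sigma\right)=S\left(\sigma\right)-S\left(\rho\right)$. Feeding this into Pinsker gives $\frac{1}{2}\left\Vert \rho-\sigma\right\Vert _{1}^{2}\leq S\left(\sigma\right)-S\left(\rho\right)$, and rearranging yields \eqref{eq:d-1}. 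The quantity under the root is nonnegative because $\sigma$ maximizes entropy over $C\left(m\right)$, so $S\left(\sigma\right)\geq S\left(\rho\right)$ for every feasible $\rho$.

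For part (2), I would apply the full identity \eqref{eq:d1} to an arbitrary state $\rho\in\mathcal{S}_{d}$, writing the linear term as $\left\langle \lambda,\Delta m\right\rangle $. Pinsker then gives
\[
\left\Vert \rho-\sigma\right\Vert _{1}^{2}\leq2\bigl(S\left(\sigma\right)-S\left(\rho\right)\bigr)+2\left\langle \lambda,\Delta m\right\rangle .
\]
Bounding $S\left(\sigma\right)-S\left(\rho\right)\leq\left|S\left(\sigma\right)-S\left(\rho\right)\right|$ and $\left\langle \lambda,\Delta m\right\rangle \leq\left\Vert \lambda\right\Vert \left\Vert \Delta m\right\Vert $ by Cauchy-Schwarz makes both summands nonnegative; subadditivity of the square root, $\sqrt{a+b}\leq\sqrt{a}+\sqrt{b}$ for $a,b\geq0$, then splits the bound into the two advertised terms, producing \eqref{eq:d3}.

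The argument is essentially mechanical, so I do not expect a serious obstacle; the one point requiring care is the sign bookkeeping in part (2). Unlike the feasible case, $\rho$ need not lie in $C\left(m\right)$, so $S\left(\sigma\right)-S\left(\rho\right)$ may be negative and cannot be placed under a square root directly---this is precisely why the statement uses $\left|S\left(\sigma\right)-S\left(\rho\right)\right|$. The nonnegativity of $D\left(\rho\Vert\sigma\right)$ guarantees that the combined right-hand side stays nonnegative throughout, so the final splitting step via square-root subadditivity is legitimate.
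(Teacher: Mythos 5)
Your proposal is correct and follows essentially the same route as the paper: both parts combine the exact identity \eqref{eq:d1} with the quantum Pinsker inequality, with the moment term vanishing in part (1) and the absolute-value/Cauchy--Schwarz/square-root-subadditivity chain handling part (2). Your remark on sign bookkeeping (nonnegativity of $D(\rho\Vert\sigma)$ justifying the initial square root) is a sensible point of care that the paper leaves implicit.
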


\begin{proof}
Take $\rho\in C\left(m\right)$. By \prettyref{lem:exact-identity},
$D\left(\rho\Vert\sigma\right)=S\left(\sigma\right)-S\left(\rho\right)$.
Apply Pinsker's inequality, we get 
\[
\left\Vert \rho-\sigma\right\Vert _{1}\leq\sqrt{2D\left(\rho\Vert\sigma\right)}=\sqrt{2\left(S\left(\sigma\right)-S\left(\rho\right)\right)},
\]
which is \eqref{eq:d-1}. 

Assume $m\in ri\left(M_{X}\right)$. Using \prettyref{lem:exact-identity},
\[
D\left(\rho\Vert\sigma\right)=S\left(\sigma\right)-S\left(\rho\right)+\lambda\cdot\Delta m.
\]
By Pinsker, 
\begin{align*}
\left\Vert \rho-\sigma\right\Vert _{1} & \leq\sqrt{2D\left(\rho\Vert\sigma\right)}=\sqrt{2\left(S\left(\sigma\right)-S\left(\rho\right)+\lambda\cdot\Delta m\right)}\\
 & \leq\sqrt{2\left(\left|S\left(\sigma\right)-S\left(\rho\right)\right|+\left|\lambda\cdot\Delta m\right|\right)}\\
 & \leq\sqrt{2\left(\left|S\left(\sigma\right)-S\left(\rho\right)\right|\right)}+\sqrt{2\left\Vert \lambda\right\Vert \cdot\left\Vert \Delta m\right\Vert }.
\end{align*}
\end{proof}
\begin{cor}[Observable rates on $V$]
\label{cor:V-rate}Let $V=span\left\{ I,X_{1},\dots,X_{k}\right\} $.
For any $A\in V$ with $\left\Vert A\right\Vert \leq1$, 
\begin{equation}
\left|tr\left(\left(\rho-\sigma\right)A\right)\right|\leq\left\Vert \rho-\sigma\right\Vert _{1}\leq\sqrt{2\left(S\left(\sigma\right)-S\left(\rho\right)+\lambda\cdot\Delta m\right)}.\label{eq:d4}
\end{equation}
In particular, if $\rho\in C\left(m\right)$, 
\[
\sup_{A\in V,\left\Vert A\right\Vert \leq1}\left|tr\left(\left(\rho-\sigma\right)A\right)\right|\leq\sqrt{2\left(S\left(\sigma\right)-S\left(\rho\right)\right)}.
\]
\end{cor}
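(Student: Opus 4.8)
The plan is to assemble the claim from three ingredients already in hand: the operator-norm/trace-norm duality (Hölder), the exact entropy-gap identity of \prettyref{lem:exact-identity}, and the quantum Pinsker inequality used in \prettyref{prop:pinsker}. No new machinery is needed; the corollary is really a repackaging of these facts into a bound phrased directly on observables in $V$.

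First I would establish the leftmost inequality in \eqref{eq:d4}. For any $A$ with $\left\Vert A\right\Vert \leq 1$ --- in fact for any $A\in M_{d}\left(\mathbb{C}\right)$, not merely $A\in V$ --- Hölder's inequality gives $\left|tr\left(\left(\rho-\sigma\right)A\right)\right|\leq\left\Vert \rho-\sigma\right\Vert_{1}\left\Vert A\right\Vert\leq\left\Vert \rho-\sigma\right\Vert_{1}$. This step is immediate and uses only the normalization $\left\Vert A\right\Vert\leq1$; restricting to $A\in V$ plays no role in this first inequality but is what makes the eventual supremum a statement about the constraint subspace.

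Next I would prove the right-hand inequality. Under the standing convention $m\in ri\left(M_{X}\right)$, so that the multipliers $\lambda$ exist, \prettyref{lem:exact-identity} gives the exact identity $D\left(\rho\Vert\sigma\right)=S\left(\sigma\right)-S\left(\rho\right)+\lambda\cdot\Delta m$. The one point worth checking is that the radicand in \eqref{eq:d4} is nonnegative: this is automatic, since the right-hand side of the identity equals the relative entropy $D\left(\rho\Vert\sigma\right)\geq0$, so no absolute values are needed and $\sqrt{2\left(S\left(\sigma\right)-S\left(\rho\right)+\lambda\cdot\Delta m\right)}$ is well defined. Applying the quantum Pinsker inequality $\frac{1}{2}\left\Vert \rho-\sigma\right\Vert_{1}^{2}\leq D\left(\rho\Vert\sigma\right)$ and taking square roots then yields $\left\Vert \rho-\sigma\right\Vert_{1}\leq\sqrt{2D\left(\rho\Vert\sigma\right)}=\sqrt{2\left(S\left(\sigma\right)-S\left(\rho\right)+\lambda\cdot\Delta m\right)}$, completing \eqref{eq:d4}.

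Finally, for the ``in particular'' statement I would specialize to $\rho\in C\left(m\right)$, where $\Delta m=m\left(\rho\right)-m=0$; then the $\lambda$-dependent term drops and \eqref{eq:d4} reduces to the $\lambda$-free bound \eqref{eq:d-1}, namely $\left\Vert \rho-\sigma\right\Vert_{1}\leq\sqrt{2\left(S\left(\sigma\right)-S\left(\rho\right)\right)}$, which in fact holds for every $m\in M_{X}$. Taking the supremum over $A\in V$ with $\left\Vert A\right\Vert\leq1$ of the first inequality and chaining it with this bound gives the stated uniform estimate. The main obstacle, if one can be named, is purely bookkeeping: respecting the convention that $\lambda$ is available only on $ri\left(M_{X}\right)$, and recognizing that the radicand's nonnegativity is guaranteed by the identity itself rather than requiring a separate argument.
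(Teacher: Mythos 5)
Your proposal is correct and follows essentially the same route as the paper: Hölder's inequality for the first bound, and the exact identity of \prettyref{lem:exact-identity} combined with Pinsker for the second. In fact your write-up is slightly more careful than the paper's one-line citation of \prettyref{prop:pinsker}, since the bound \eqref{eq:d4} appears only as an intermediate step inside that proposition's proof (not in its statement), and your observation that the radicand is nonnegative because it equals $D\left(\rho\Vert\sigma\right)$ is exactly the right justification.
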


\begin{proof}
Apply \prettyref{prop:pinsker} and Holder's inequality $\left|tr\left(XA\right)\right|\leq\left\Vert X\right\Vert _{1}\left\Vert A\right\Vert $. 
\end{proof}
\begin{rem}
Recall that 
\[
\left|tr\left(\left(\rho-\sigma\right)A\right)\right|\leq\left\Vert \rho-\sigma\right\Vert _{1}\left\Vert A\right\Vert 
\]
i.e., trace distance controls all expectation values. Specializing
to $A\in V=span\left\{ I,X_{1},\dots,X_{k}\right\} $ provides operational
bounds on the very observables that define the constraints. Thus \prettyref{cor:V-rate}
translates the Pinsker rates into uniform control of all constrained
expectations. If one prefers to avoid the interior assumption here,
\eqref{eq:d4} can be written using $\sqrt{2D\left(\rho\Vert\sigma\right)}$
on the right-hand side. When $m\left(\rho\right)=m$, this reduces
to $\sqrt{2\left(S\left(\sigma\right)-S\left(\rho\right)\right)}$. 
\end{rem}

\bibliographystyle{amsalpha}
\bibliography{ref}

\newcommand{\etalchar}[1]{$^{#1}$}
\providecommand{\bysame}{\leavevmode\hbox to3em{\hrulefill}\thinspace}
\providecommand{\MR}{\relax\ifhmode\unskip\space\fi MR }
\providecommand{\MRhref}[2]{%
  \href{http://www.ams.org/mathscinet-getitem?mr=#1}{#2}
}
\providecommand{\href}[2]{#2}
\begin{thebibliography}{WGFC14}

\bibitem[AN00]{MR1800071}
Shun-ichi Amari and Hiroshi Nagaoka, \emph{Methods of information geometry},
  Translations of Mathematical Monographs, vol. 191, American Mathematical
  Society, Providence, RI; Oxford University Press, Oxford, 2000, Translated
  from the 1993 Japanese original by Daishi Harada. \MR{1800071}

\bibitem[Arv69]{MR247483}
William~B. Arveson, \emph{On subalgebras of {$C^{\ast} $}-algebras}, Bull.
  Amer. Math. Soc. \textbf{75} (1969), 790--794. \MR{247483}

\bibitem[Aud07]{MR2344161}
Koenraad M.~R. Audenaert, \emph{A sharp continuity estimate for the von
  {N}eumann entropy}, J. Phys. A \textbf{40} (2007), no.~28, 8127--8136.
  \MR{2344161}

\bibitem[BHN{\etalchar{+}}15]{doi:10.1073/pnas.1411728112}
Fernando Brand{\~a}o, Micha{\l} Horodecki, Nelly Ng, Jonathan Oppenheim, and
  Stephanie Wehner, \emph{The second laws of quantum thermodynamics},
  Proceedings of the National Academy of Sciences \textbf{112} (2015), no.~11,
  3275--3279.

\bibitem[CG19]{MR3982889}
Eric Chitambar and Gilad Gour, \emph{Quantum resource theories}, Rev. Modern
  Phys. \textbf{91} (2019), no.~2, 025001, 48. \MR{3982889}

\bibitem[CH23]{MR4538180}
Arjan Cornelissen and Yassine Hamoudi, \emph{A sublinear-time quantum algorithm
  for approximating partition functions}, Proceedings of the 2023 {A}nnual
  {ACM}-{SIAM} {S}ymposium on {D}iscrete {A}lgorithms ({SODA}), SIAM,
  Philadelphia, PA, 2023, pp.~1245--1264. \MR{4538180}

\bibitem[CS17]{MR3676655}
Anirban~Narayan Chowdhury and Rolando~D. Somma, \emph{Quantum algorithms for
  {G}ibbs sampling and hitting-time estimation}, Quantum Inf. Comput.
  \textbf{17} (2017), no.~1-2, 41--64. \MR{3676655}

\bibitem[Csi75]{MR365798}
I.~Csisz\'{a}r, \emph{{$I$}-divergence geometry of probability distributions
  and minimization problems}, Ann. Probability \textbf{3} (1975), 146--158.
  \MR{365798}

\bibitem[Ell85]{MR793553}
Richard~S. Ellis, \emph{Entropy, large deviations, and statistical mechanics},
  Grundlehren der mathematischen Wissenschaften [Fundamental Principles of
  Mathematical Sciences], vol. 271, Springer-Verlag, New York, 1985.
  \MR{793553}

\bibitem[Fan73]{MR345574}
M.~Fannes, \emph{A continuity property of the entropy density for spin lattice
  systems}, Comm. Math. Phys. \textbf{31} (1973), 291--294. \MR{345574}

\bibitem[Geo06]{MR2238070}
Tryphon~T. Georgiou, \emph{Relative entropy and the multivariable
  multidimensional moment problem}, IEEE Trans. Inform. Theory \textbf{52}
  (2006), no.~3, 1052--1066. \MR{2238070}

\bibitem[GFWC12]{MR3036977}
Christopher~E. Granade, Christopher Ferrie, Nathan Wiebe, and D.~G. Cory,
  \emph{Robust online {H}amiltonian learning}, New J. Phys. \textbf{14} (2012),
  no.~October, 103013, 31. \MR{3036977}

\bibitem[GLF{\etalchar{+}}10]{PhysRevLett.105.150401}
David Gross, Yi-Kai Liu, Steven~T. Flammia, Stephen Becker, and Jens Eisert,
  \emph{Quantum state tomography via compressed sensing}, Phys. Rev. Lett.
  \textbf{105} (2010), 150401.

\bibitem[GLTZ06]{MR2204925}
Sheldon Goldstein, Joel~L. Lebowitz, Roderich Tumulka, and Nino Zangh\`\i,
  \emph{Canonical typicality}, Phys. Rev. Lett. \textbf{96} (2006), no.~5,
  050403, 3. \MR{2204925}

\bibitem[Hol82]{MR681693}
A.~S. Holevo, \emph{Probabilistic and statistical aspects of quantum theory},
  North-Holland Series in Statistics and Probability, vol.~1, North-Holland
  Publishing Co., Amsterdam, 1982, Translated from the Russian by the author.
  \MR{681693}

\bibitem[Hol11]{MR2797301}
Alexander Holevo, \emph{Probabilistic and statistical aspects of quantum
  theory}, second ed., Quaderni/Monographs, vol.~1, Edizioni della Normale,
  Pisa, 2011, With a foreword from the second Russian edition by K. A. Valiev.
  \MR{2797301}

\bibitem[Jay57]{MR87305}
E.~T. Jaynes, \emph{Information theory and statistical mechanics}, Phys. Rev.
  (2) \textbf{106} (1957), 620--630. \MR{87305}

\bibitem[MH21]{MR4207087}
Ryan~L. Mann and Tyler Helmuth, \emph{Efficient algorithms for approximating
  quantum partition functions}, J. Math. Phys. \textbf{62} (2021), no.~2, Paper
  No. 022201, 7. \MR{4207087}

\bibitem[OP93]{MR1230389}
Masanori Ohya and D\'{e}nes Petz, \emph{Quantum entropy and its use}, Texts and
  Monographs in Physics, Springer-Verlag, Berlin, 1993. \MR{1230389}

\bibitem[Pau02]{MR1976867}
Vern Paulsen, \emph{Completely bounded maps and operator algebras}, Cambridge
  Studies in Advanced Mathematics, vol.~78, Cambridge University Press,
  Cambridge, 2002. \MR{1976867}

\bibitem[Pet03]{MR1961186}
D\'{e}nes Petz, \emph{Monotonicity of quantum relative entropy revisited}, Rev.
  Math. Phys. \textbf{15} (2003), no.~1, 79--91. \MR{1961186}

\bibitem[PW09]{MR2570218}
David Poulin and Pawel Wocjan, \emph{Sampling from the thermal quantum {G}ibbs
  state and evaluating partition functions with a quantum computer}, Phys. Rev.
  Lett. \textbf{103} (2009), no.~22, 220502, 4. \MR{2570218}

\bibitem[RD66]{MR193530}
B.~Russo and H.~A. Dye, \emph{A note on unitary operators in {$C^{\ast}
  $}-algebras}, Duke Math. J. \textbf{33} (1966), 413--416. \MR{193530}

\bibitem[Roc70]{MR274683}
R.~Tyrrell Rockafellar, \emph{Convex analysis}, Princeton Mathematical Series,
  No. 28, Princeton University Press, Princeton, NJ, 1970. \MR{274683}

\bibitem[Rue69]{MR289084}
David Ruelle, \emph{Statistical mechanics: {R}igorous results}, W. A. Benjamin,
  Inc., New York-Amsterdam, 1969. \MR{289084}

\bibitem[Ste14]{MR3227512}
Weis Stephan, \emph{Continuity of the maximum-entropy inference}, Comm. Math.
  Phys. \textbf{330} (2014), no.~3, 1263--1292. \MR{3227512}

\bibitem[Wat18]{Watrous_2018}
John Watrous, \emph{The theory of quantum information}, Cambridge University
  Press, 2018.

\bibitem[Weh78]{MR496300}
Alfred Wehrl, \emph{General properties of entropy}, Rev. Modern Phys.
  \textbf{50} (1978), no.~2, 221--260. \MR{496300}

\bibitem[Wei16]{MR3498341}
Stephan Weis, \emph{Maximum-entropy inference and inverse continuity of the
  numerical range}, Rep. Math. Phys. \textbf{77} (2016), no.~2, 251--263.
  \MR{3498341}

\bibitem[WGFC14]{PhysRevLett.112.190501}
Nathan Wiebe, Christopher Granade, Christopher Ferrie, and D.~G. Cory,
  \emph{Hamiltonian learning and certification using quantum resources}, Phys.
  Rev. Lett. \textbf{112} (2014), 190501.

\bibitem[Wil17]{MR3645110}
Mark~M. Wilde, \emph{Quantum information theory}, second ed., Cambridge
  University Press, Cambridge, 2017. \MR{3645110}

\end{thebibliography}

\end{document}